\definecolor{grey}{gray}{0.5}
\newtheorem{theorem}{Theorem}[section]
\newtheorem{lemma}[theorem]{Lemma}
\newtheorem{proposition}[theorem]{Proposition}
\newtheorem{corollary}[theorem]{Corollary}
\newcommand{\binrep}[1]{\ensuremath{\mathtt{#1}}}
\newcommand{\upset}[1]{\ensuremath{\uparrow \{ #1 \}}\xspace}
\newcommand{\size}[1]{\left\vert #1 \right\vert}
\newcommand{\logicor}{\mathbin{\vert\vert}}
\newcommand{\logicand}{\mathbin{\&\&}}
\newcommand{\eg}{\emph{e.g.}\xspace}
\newcommand{\bN}{\ensuremath{\mathbb{N}}\xspace}
\let\union\bigcup
\let\inter\bigcap
\newcommand{\setof}[2]{\left\{\,#1\ :\ #2\,\right\}}
\newcommand{\sigclass}[2]{\ensuremath{Σ^{#1}_{#2}}}
\newcommand{\piclass}[2]{\ensuremath{Π^{#1}_{#2}}}
\newcommand{\deltclass}[2]{\ensuremath{Δ^{#1}_{#2}}}
\newcommand{\sigarith}[1]{\sigclass{0}{#1}}
\newcommand{\piarith}[1]{\piclass{0}{#1}}
\newcommand{\deltarith}[1]{\deltclass{0}{#1}}
\newcommand{\compclass}[1]{\textsc{#1}\xspace}
\newcommand{\logspace}{\compclass{LogSpace}}
\newcommand{\ptime}{\compclass{Ptime}}
\newcommand{\pspace}{\compclass{Pspace}}
\newcommand{\exptime}{\compclass{ExpTime}}
\newcommand{\aut}{\compclass{Aut}}
\newcommand{\Riceeq}{\ensuremath{\mathfrak{R}}\xspace}
\newcommand{\Equ}[2][]{\ensuremath{\textrm{Equ}(#2)_{#1}}\xspace}
\newcommand{\thelat}[1][]{\Equ[#1]{\bN}}
\newcommand{\equspec}[1]{\thelat[#1]}
\newcommand{\equre}{\equspec{\mathrm{r.e.}}}
\newcommand{\equpr}{\equspec{\mathrm{p.r.}}}
\newcommand{\equexp}{\equspec{\exptime}}
\newcommand{\equpspace}{\equspec{\pspace}}
\newcommand{\equp}{\equspec{\ptime}}
\newcommand{\equl}{\equspec{\logspace}}
\newcommand{\equaut}{\equspec{\aut}}
\newcommand{\equ}[1]{\ensuremath{\mathcal{#1}}\xspace}
\newcommand{\equone}{\equ{E}}
\newcommand{\equtwo}{\equ{F}}
\newcommand{\equthree}{\equ{G}}
\newcommand{\toeven}{\to_{\mathrm{even}}}
\newcommand{\toodd}{\to_{\mathrm{odd}}}
\newcommand{\approxeven}{\approx_{\mathrm{even}}}
\newcommand{\approxodd}{\approx_{\mathrm{odd}}}
\newcommand{\lubname}{join\xspace}
\newcommand{\glbname}{meet\xspace}
\newcommand{\lub}{\ensuremath{\vee}\xspace}
\newcommand{\LUB}{\ensuremath{\bigvee}\xspace}
\newcommand{\glb}{\ensuremath{\wedge}\xspace}
\newcommand{\GLB}{\ensuremath{\bigwedge}\xspace}
\newcommand{\parti}[1]{\ensuremath{\mathcal{#1}}\xspace}
\title{Computability in the Lattice of Equivalence Relations}
\author{Jean-Yves Moyen\footnote{Supported by the Marie
    Sk\l{}odowska--Curie action ``Walgo'', program H2020-MSCA-IF-2014,
    number 655222} \qquad \qquad Jakob Grue
  Simonsen\footnote{Partially supported by the Danish Council for
    Independent Research \emph{Sapere Aude} grant
    ``Complexity via Logic and Algebra'' (COLA).}\\
  \institute{Department of Computer Science, University of Copenhagen
    (DIKU)\\
    Njalsgade 128-132, 2300 Copenhagen S, Denmark}
  \email{\hspace{-1cm} Jean-Yves.Moyen@univ-paris13.fr \qquad \qquad
    simonsen@diku.dk}
}
\date{\today}
\begin{document}

\maketitle

\begin{abstract}
  We investigate computability in the lattice of equivalence relations
  on the natural numbers. We mostly investigate whether the subsets of
  appropriately defined subrecursive equivalence relations --for
  example the set of all polynomial-time decidable equivalence
  relations-- form sublattices of the lattice.
\end{abstract}

\section{Introduction}
\subsection{Motivation}
As there are uncountably many properties of programs (because each set
of programs is a property), most properties must be undecidable. A
celebrated negative result shows that there are even further barriers
to decidability: Rice's Theorem~\cite{Rice} states that every
non-trivial, extensional property on programs is undecidable. An
extensional property is one that depends solely on the (input-output)
\emph{function} computed by the program.

While Rice's theorem is six decades old, it still spurs new research
trying to find the boundary of decidable progam properties, for
example Asperti's work on complexity cliques~\cite{Asperti:2008}. Such
properties can be studied fruitfully by viewing results such as Rice's
and Asperti's as assertions about \emph{equivalence relations};
indeed, any program property is an equivalence relation that has at
most two classes: the class of programs having the property and the
class of programs not having it. In this view, Rice's Theorem studies
the \emph{extensional equivalence}, \Riceeq, where two programs are
equivalent if and only if they compute the same function. Rice's
theorem thus states that \emph{no non-trivial equivalence class or
  union of equivalence classes in \Riceeq is decidable}.

Rather than studying individual equivalences such as \Riceeq, it is
interesting to look at the set of all equivalences between programs
and at subsets of equivalences sharing certain properties. The set of
all equivalence relations between programs has a very natural complete
lattice structure. With this order, taking the union of some classes
yields a larger equivalence relation and Rice's Theorem says that any
class of any non-trivial equivalence relation in the principal filter
at \Riceeq is undecidable.

The fact that Rice's Theorem is so neatly expressed in the lattice
language hints that we can gain knowledge by studying it. One way is
to look at other equivalences that the extensional
one~\cite{Asperti:2008, MS-Dice}. Another is to look at the lattice
structure itself. Since there are uncountably many equivalences, the
lattice is hard to study and we want to find a good way to approximate
it in a manageable way. For this reason, we study subsets of
equivalences and how they interact with the lattice structure.

The lattice-theoretic properties of $\Equ{S}$ are well-understood;
basic facts can be gleaned from the papers
\cite{Ore42,RivalStanford:algpart} and the textbooks \cite[\S
8-9]{Birkhoff:lattice} \cite[Sec.\ IV.4]{Gratzer:genlattice}.  As the
lattice structure of $\Equ{S}$ is isomorphic to $\Equ{T}$ for any sets
$S$ and $T$ of identical cardinality, we may without loss of
generality consider the set of equivalences over the natural numbers,
\thelat. That is, we work modulo an unspecified encoding of
programs.

Because Rice's Theorem is all about computability, we are interested
here in the subsets of equivalences that are defined in term of
computability (\emph{e.g.} the set of decidable equivalences) or
complexity (\emph{e.g.} the set of equivalences decidable in
polynomial time). Rice's Theorem can thus be expressed as saying that
the intersection of the principal filter at $\Riceeq$ and the set of
decidable equivalences is reduced to the trivial equivalence with one
class containing everything.

Finding a subset of equivalences that is manageable but still retain
most of the order-theoretic properties of the whole lattice provides a
way to approximate \thelat (as $\mathbb{Q}$ approximates $\mathbb{R}$)
and allows for a more focused and systematic study.

\subsection{Equivalence relations and Lattice}
We consider the set \thelat of all equivalences over natural
numbers. If \equone is an equivalence, we write $m \equone n$ to say
that $m$ and $n$ are equivalent. \thelat is ordered by
$\equone \leq \equone'$ iff $m \equone n \Rightarrow m \equone'
n$. Note that this is exactly the subset ordering $\subseteq$ on
$\bN \times \bN$. We can easily see that $\equone ≤ \equone'$ iff
each class of $\equone'$ is the union of one or more classes of
\equone.

$(\thelat, ≤)$ is a lattice with the following operations:
\begin{itemize}[nosep]
\item The \glbname (greatest lower bound) of \equone and \equtwo is
  $\equthree = \equone \glb \equtwo$ such that $m \equthree n$ iff $m
  \equone n$ and $m \equtwo n$. In this case, the classes of \equthree
  are exactly the (non-empty) intersections of one class of \equone
  and one class of \equtwo.
\item The \lubname (lowest upper bound) of \equone and \equtwo is
  $\equthree = \equone \lub \equtwo$ such that $m \equthree n$ iff
  there exists a finite sequence $a_1, …, a_k$ such that $m
  \equone a_1 \equtwo a_2 \equone … \equtwo n$.
\end{itemize}

Note that the \lubname is much harder to express (and compute) than
the \glbname. This will be reflected in the results. Indeed closure
for \glbname usually boils down to closure under intersection, but
closure for \lubname is far from closure under union.

Standard results for \thelat were laid out by Ore~\cite{Ore42}. It is
known that \thelat is, among others:
\begin{itemize}[nosep]
\item bounded with minimal element $\bot$ being the equivalence where
  each class is a singleton (no two different elements are equivalent)
  and the maximal element $\top$ being the equivalence with a single
  class containing every elements (any two elements are equivalents);
\item complete, that is closed under arbitrary \lubname and \glbname
  (and not only under finite ones);
\item atomistic, each element is the \lubname of atoms, where atoms
  are successors of $\bot$, that is equivalences with one class
  containing two elements and the other are singletons;
\item relatively complemented (hence complemented), for each \equone,
  there exists \equtwo such that $\equone \lub \equtwo = \top$ and
  $\equone \glb \equtwo = \bot$; non $\top$ or $\bot$ equivalences
  have infinitely many complements, most have uncountably many.
\end{itemize}

An equivalence relation \equone where exactly one class is not a
singleton is called \emph{singular}.

\begin{lemma}[Complements to singular equivalence relations]
  \label{lem:compl-sing}
  Let \equone be a singular equivalence relation with non-singleton class
  $E$. \equtwo is one of its complement iff each class of \equtwo
  contains exactly one element of $E$.
\end{lemma}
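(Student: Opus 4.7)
The plan is to unfold the definitions of complement in a lattice, recalling that $\equtwo$ complements $\equone$ iff $\equone \glb \equtwo = \bot$ and $\equone \lub \equtwo = \top$, and then prove the two directions of the biconditional separately. Throughout, the key structural fact I will exploit is that since $\equone$ is singular with non-singleton class $E$, two distinct elements are $\equone$-equivalent iff they both lie in $E$; in particular, for any $m \notin E$, the $\equone$-class of $m$ is $\{m\}$.

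For the ($\Leftarrow$) direction, I will assume each class of $\equtwo$ contains exactly one element of $E$ and verify both lattice equalities. For $\equone \glb \equtwo = \bot$, suppose $m \neq n$ with $m \equone n$ and $m \equtwo n$; the first forces $m,n \in E$, but then $m$ and $n$ would be two distinct $E$-elements in a single $\equtwo$-class, contradicting the hypothesis. For $\equone \lub \equtwo = \top$, given arbitrary $m, n$, let $e_m, e_n \in E$ be the unique $E$-representatives in the $\equtwo$-classes of $m$ and $n$ respectively; then $m \equtwo e_m \equone e_n \equtwo n$ gives a zigzag of length three witnessing $m (\equone \lub \equtwo) n$.

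For the ($\Rightarrow$) direction, I assume $\equtwo$ is a complement and must show that each $\equtwo$-class meets $E$ in exactly one element. The ``at most one'' half is immediate from $\equone \glb \equtwo = \bot$: two distinct elements of $E$ in a common $\equtwo$-class would be related by both equivalences. The ``at least one'' half uses $\equone \lub \equtwo = \top$ contrapositively: if some $\equtwo$-class $C$ avoided $E$, then every element of $C$ lies outside $E$ and therefore has a singleton $\equone$-class. Consequently, any zigzag $m \equone a_1 \equtwo a_2 \equone \cdots$ starting from $m \in C$ cannot escape $C$, since each $\equone$-step from a non-$E$ element is trivial and each $\equtwo$-step stays within $C$. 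This would prevent connecting $m$ to any fixed $e \in E$, contradicting $\equone \lub \equtwo = \top$.

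The only mildly subtle step is the trapping argument in the ($\Rightarrow$) direction, but it is really just the observation that the \lub is generated by alternating chains and that one of the two generating relations (namely $\equone$) acts trivially outside $E$. Everything else is direct unfolding of the definitions of $\glb$ and $\lub$ recalled in the excerpt.
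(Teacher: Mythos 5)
Your proof is correct and complete: both directions follow from a careful unfolding of $\equone \glb \equtwo = \bot$ and $\equone \lub \equtwo = \top$, and the ``trapping'' argument showing that an alternating chain starting in a $\equtwo$-class disjoint from $E$ can never leave that class is exactly the point that needs care. The paper states Lemma~\ref{lem:compl-sing} without any proof, so there is nothing to compare against; your argument is the natural one and supplies the missing justification.
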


To avoid confusion with meet and join, we note $\logicand$ and
$\logicor$ the logical conjunction and disjunction. We note
$\binrep{n}$ the binary representation of $n$.

\subsection{Computability and complexity}
We refer to standard textbooks convering computability and complexity
theory (\eg, \cite{JonesCC}).

Unless otherwise stated, all Turing Machines are multi-tape machines
with two designated read-only input tapes, one write-only output tape,
and any number of work tapes. Machines are deterministic unless
otherwise stated.

Equivalences relations are then classified in the natural way with
respect to the Turing Machine deciding them. That is, for example,
``polynomial time relations'' or ``recursively enumerable relations''
are defined in the straightforward way.

\subsection{Results}
The paper considers three broad categories: automatic, subrecursive,
and arithmetical equivalence relations and how they interact with the
basic properties of the entire lattice of equivalence relations,
namely \glbname, \lubname and complements. The results are summarised
here.

\begin{center}
  \begin{tabular}{|r|cc|cc|c|c|}
    \hline
    & \multicolumn{2}{c|}{Finite}
    & \multicolumn{2}{c|}{Arithmetical}
    & Arbitrary & \\
    & $\glb$ & $\lub$ & $\glb$
                &  $\lub$ &  $\glb$/$\lub$ & complements\\
    \hline
    Automatic & Yes & Yes & No & Yes & No/Yes & N/A\\
    Subrecursive & Yes & No$^{\dagger}$ & ? & No$^{\dagger}$
    & No & $≥$ \pspace\\
    $\sigarith{k}$ & Yes & Yes & No & Yes & No & No\\
    $\piarith{k}$ & Yes & No & Yes & No & No & ?\\
    $\deltarith{k}$ & Yes & No & No & No & No & Yes\\
    \multicolumn{7}{|l|}{\footnotesize $^{\dagger}$: for
    \logspace or larger classes.}\\
    \hline
  \end{tabular}
\end{center}

Note that a ``Yes'' entry does not imply any results concerning actual
computation of the operation --it merely implies that the subset of
equivalence relations is closed under the operation considered-- but
the actual computation involved in the operation may require resources
beyond those implied by the subset. If we consider a subset $S$ of
equivalences and an operation on the equivalences, there are four
possibilities:
\begin{enumerate}[nosep]
\item $S$ is not closed under this operation, that is there exist
  elements in $S$ such that, when applying the operation to them we
  can obtain an element not in $S$. \emph{This is the canonical
    meaning of ``No'' in the above table.}
\item \label{foo} $S$ is closed under the operation (i.e., there is a
  ``Yes'' in the above table), and either
  \begin{enumerate}[nosep]
  \item \label{foolmeonce} the operation is computable within the
    resource requirements defining the class, e.g. polytime computable
    for the set of polytime decidable equivalence relations.
  \item \label{foolmetwice} the operation is computable, but not
    within the resource requirements defining the class.
  \item \label{foolmethrice} the operation is not computable.
  \end{enumerate}
\end{enumerate}

\section{Automatic equivalence relations}
There are several ``natural'' ways to define equivalence relations
decidable by finite automata. The one we consider here uses a single,
two-way input tape with two inputs separated by a special symbol;
another natural variation would have the two inputs on two separate
tapes, or a single tape but two heads. Unlike Turing machines, the
class of sets decidable by finite automata properly increases with the
number of input tapes and tape heads; it is therefore quite
conceivable that simple variations of the class we consider would have
different properties.

We consider two-way finite automata that have only a single tape where
both inputs are encoded. We believe the most natural form to be
$a \square b$ where $a,b \in \{0,1\}^+$ and $\square$ is a separator
symbol (``blank''). We denote $\equaut$ the set of \emph{automatic}
equivalence relations, that is such that the language
$\{\mathtt{m} \square \mathtt{n} : m\parti{E}n\}$ is accepted by a
two-way non-deterministic finite automaton. By standard results, this
language is thus regular. That is, 2-way Non-Deterministic Automata
recognise exactly the same languages as 1-way Deterministic Automata.

Once such an automaton is fixed, for any integer $m$ we note $q_m$ the
state reached after reading $\binrep{m} \square$.

\begin{proposition}\label{prop:automata_silly}
  Let $\parti{E} \in \equaut$. It has finitely many equivalence
  classes.
\end{proposition}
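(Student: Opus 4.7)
The plan is to leverage the fact, just noted in the excerpt, that the language $L = \{\binrep{m}\square\binrep{n} : m\parti{E}n\}$ is regular, so it is recognised by some one-way deterministic finite automaton $M$ with finite state set $Q$. For each $m \in \bN$, let $q_m$ denote the state of $M$ after reading the prefix $\binrep{m}\square$, as in the remark preceding the proposition. The strategy is to show that the map $m \mapsto q_m$ factors through the quotient $\bN / \parti{E}$, so that the number of equivalence classes is bounded by $|Q|$.

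First I would observe that, from state $q_m$, $M$ accepts an input suffix $\binrep{n}$ if and only if $m\parti{E}n$; this is immediate from the definition of $L$ and the determinism of $M$. In particular, the equivalence class of $m$ is determined entirely by $q_m$: if $q_m = q_{m'}$, then for every $n$ we have $m\parti{E}n \iff m'\parti{E}n$, so $m$ and $m'$ have the same equivalence class.

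Next I would use reflexivity of $\parti{E}$ to upgrade this to an actual identity of classes. Since $m\parti{E}m$, the word $\binrep{m}\square\binrep{m}$ is in $L$, so from $q_m$ the machine accepts $\binrep{m}$. If $q_m = q_{m'}$, then the machine also accepts $\binrep{m}$ from $q_{m'}$, which means $m'\parti{E}m$; thus $m$ and $m'$ lie in the same $\parti{E}$-class. Hence distinct equivalence classes give rise to distinct states in $Q$, and consequently the number of equivalence classes of $\parti{E}$ is at most $|Q|$, which is finite.

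There is no real obstacle here; the only subtle point is remembering to invoke reflexivity to bridge the equivalence of \emph{residual languages} at $q_m$ and $q_{m'}$ with actual $\parti{E}$-equivalence of $m$ and $m'$. Everything else is a direct application of the fact that a DFA has finitely many states.
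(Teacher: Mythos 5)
Your proof is correct and follows essentially the same route as the paper's: pass to a one-way DFA, use reflexivity to see that $\binrep{m}\square\binrep{m}$ is accepted, and conclude by determinism that $q_m = q_{m'}$ forces $m'\parti{E}m$, bounding the number of classes by the number of states. The extra care you take in distinguishing ``same residual language'' from ``same $\parti{E}$-class'' is a welcome elaboration but not a different argument.
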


The proof uses an argument taken from the Myhill-Nerode
Theorem~\cite{nerode1958}.

\begin{proof}[Sketch of proof]
  The automaton must accept $\binrep{m} \square \binrep{m}$ by
  reflexivity of equivalences. Hence, by determinism, if $q_n = q_m$,
  the automaton must also accept $\binrep{n} \square \binrep{m}$ and
  $n \equone m$. Thus, there can be at most as many classes as states.
\end{proof}

Note that Proposition \ref{prop:automata_silly} implies that some very
simple equivalence relations are not elements of $\equaut$; for
example, $\bot \notin \equaut$ (that is, $\{\binrep{n} \square
\binrep{n}\}$ is not a regular language, a well-known fact).

Since the number of classes in any equivalence of $\equaut$ is finite
but unbounded, given such an equivalence, it is always possible to
find a class with several elements and create a new equivalence by
taking one element out of this class and making a new singleton
class. This new equivalence is smaller than the initial one and is
still single-tape automatic. Hence, $\equaut$ is neither bounded, nor
complete.

Moreover, let $\equone$ and $\equtwo$ be two equivalences, each with
finitely many classes $E_1, …, E_m$ and $F_1, …, F_n$. Since the
classes of $\equone \glb \equtwo$ are exactly the non-empty
$E_i \inter F_j$, there are at most $n × m$ such classes, that is, a
finite number, thus it cannot be $\bot$ and $\equtwo$ cannot be a
complement to $\equone$.

Hence, no complement of a single-tape automatic equivalence is
single-tape automatic.

\begin{theorem}
  $\equaut$ is a lattice.
\end{theorem}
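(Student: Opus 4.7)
The plan is to verify closure of $\equaut$ under binary \glbname and \lubname; since $\equaut \subseteq \thelat$ is obviously nonempty, this suffices to show it is a sublattice of \thelat and hence a lattice in its own right. Throughout, I use the fact (stated just before Proposition~\ref{prop:automata_silly}) that two-way non-deterministic automata accept exactly the regular languages.

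For the \glbname the argument is essentially one line: if $L_1, L_2$ are the regular languages associated with $\equone, \equtwo \in \equaut$, then the language for $\equone \glb \equtwo$ is exactly $L_1 \cap L_2$, and regular languages are closed under intersection.

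The \lubname is the main obstacle, since the chain-of-equivalents definition $m \equone a_1 \equtwo a_2 \equone \cdots \equtwo n$ has a priori unbounded length and cannot be directly simulated on a finite automaton. The key observation is that Proposition~\ref{prop:automata_silly} renders this simulation unnecessary: both $\equone$ and $\equtwo$ have only finitely many classes, so $\equone \lub \equtwo$ has finitely many classes $G_1, \ldots, G_k$, each a finite union of classes of $\equone$. Next I would show that every class $E$ of $\equone$, viewed as a set of binary strings $\{\binrep{m} : m \in E\}$, is a regular language: fix any $m_0 \in E$, intersect $L_1$ with the regular set $\{\binrep{m_0}\} \cdot \{\square\} \cdot \{0,1\}^+$, and take the left quotient by the prefix $\binrep{m_0}\square$, which yields precisely $\{\binrep{n} : n \equone m_0\}$. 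Each $G_i$ is then a finite union of such regular sets, hence itself regular. Finally, the language for $\equone \lub \equtwo$,
\[
  \bigcup_{i=1}^{k} \{\binrep{m} : m \in G_i\} \cdot \{\square\} \cdot \{\binrep{n} : n \in G_i\},
\]
is a finite union of concatenations of regular languages, hence regular.

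The main subtlety is conceptual rather than technical: one never actually simulates the chain defining the \lubname, because the finiteness afforded by Proposition~\ref{prop:automata_silly} collapses it into a union of finitely many ``class sets''. An automaton for $\equone \lub \equtwo$ need only identify which $G_i$ contains $m$ and which contains $n$, and accept iff they coincide.
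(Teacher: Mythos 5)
Your proof is correct, and the \glbname half is identical to the paper's (intersection of the two regular languages). For the \lubname, you and the paper rely on the same key fact --- Proposition~\ref{prop:automata_silly} forces $\equone$ and $\equtwo$ to have finitely many classes, so the a priori unbounded chain in the definition of $\lub$ collapses into a finite merging of classes --- but the execution differs. The paper builds an explicit NFA with $\epsilon$-transitions: it extracts representatives $m_1, \ldots, m_e$ and $n_1, \ldots, n_f$ of the classes, precomputes which pairs are related under $\equone \lub \equtwo$, and wires together copies of the two given automata accordingly. You instead establish regularity abstractly: each class of $\equone$ is regular (as a left quotient of $L_1$ intersected with a regular set --- a correct and tidy observation), each class $G_i$ of the join is a finite union of such classes, and the join language is a finite union of concatenations of regular languages; the separator $\square$ guarantees the concatenations decompose uniquely, so this union is exactly the right language. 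Your route avoids the automaton surgery, but as stated it is purely existential: it does not explain how to find the classes $G_i$ or decide which classes of $\equone$ and $\equtwo$ get merged. The paper's construction is effective, and that effectiveness is exactly what is used in the remark immediately following the theorem (``not only is $\equaut$ a lattice, but both the meets and the joins are effectively computable''). To recover that from your argument you would need to add the paper's observation that representatives, and the (finitely decidable, boundedly long) chains relating them, can be computed from the minimal DFAs.
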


\begin{proof}[Sketch of proof]
  The language for the meet is the intersection of both languages and
  regular languages are closed under intersection.

  In order to build an automaton recognising the join, we first start
  by finding representatives of each class of both equivalences :
  $m₁, …, m_e$ and $n₁, …, n_f$; this is doable by standard search in
  minimal DFA. Next we precompute which $m_i$ and $n_j$ are related in
  $\equone \lub \equtwo$; this is doable because the finite number of
  classes implies a bounded length for the chain when unfolding the
  join definition.

  Lastly, we can build an NFA with $ε$-transitions that first check
  which unique $m_i$ is such that $m \equone m_i$ ($ε$-transitions to $e$
  copies of the first automaton); and then check if $n$ is such that
  $n \equtwo n_j$ for one of the $n_j$ related to $m_i$ (correct
  number of $ε$-transitions to copies of the second automata).
\end{proof}

Thus, not only is $\equaut$ a lattice, but both the meets and the
joins are effectively computable. Computing these, however, cannot be
performed by a DFA (this is case~\ref{foolmetwice} of the discussion
after the results Table).

\begin{proposition}
  Let $k ≥ 1$. The \glbname of a $\sigarith{k}$ set of automatic
  equivalence relations is not necessarily automatic.
\end{proposition}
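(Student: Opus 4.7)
The plan is to exhibit a computable (hence $\sigarith{1}$, and a fortiori $\sigarith{k}$ for every $k \geq 1$) family $\{\mathcal{E}_n\}_{n \in \bN}$ of automatic equivalence relations whose \glbname is the identity relation $\bot$. Since Proposition~\ref{prop:automata_silly} already rules $\bot$ out of $\equaut$ (it has infinitely many classes), this is enough.

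Concretely, for each $n \in \bN$ I would let $\mathcal{E}_n$ be the equivalence relation with exactly the two classes $\{m \in \bN : m < n\}$ and $\{m \in \bN : m \geq n\}$. Each $\mathcal{E}_n$ is easily seen to be automatic: for the fixed constant $n$, a DFA reading $\binrep{m} \square \binrep{m'}$ can decide whether each of $m$ and $m'$ is less than $n$ using only finitely many states, and accept iff the two outcomes agree. Since $n \mapsto \mathcal{E}_n$ is total computable (the index $n$ directly yields a DFA description of $\mathcal{E}_n$), the set $\{\mathcal{E}_n : n \in \bN\}$ is $\sigarith{1}$ under any reasonable Gödel numbering of automata.

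It then remains to compute the meet. For distinct $m, m' \in \bN$ with, say, $m < m'$, taking $n = m'$ yields $m < n$ and $m' \geq n$, so $m$ and $m'$ lie in different classes of $\mathcal{E}_n$ and hence in different classes of $\GLB_{n} \mathcal{E}_n$. Reflexivity handles the case $m = m'$, giving $\GLB_{n} \mathcal{E}_n = \bot \notin \equaut$.

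No genuine obstacle arises in the argument; the only point requiring any care is agreeing on what a ``$\sigarith{k}$ set of automatic equivalence relations'' means formally, but any sensible coding of DFAs makes the family above trivially $\sigarith{1}$, so the statement follows for all $k \geq 1$ at once.
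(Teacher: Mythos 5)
Your proof is correct and follows essentially the same route as the paper's: the paper also exhibits a computable (hence $\sigarith{k}$) family of two-class automatic equivalence relations whose \glbname is $\bot \notin \equaut$, using the singular equivalences with classes $\{i\}$ and $\bN \setminus \{i\}$ where you use the threshold classes $\{m < n\}$ and $\{m \geq n\}$. The difference is only in the choice of family, not in the idea.
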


\begin{proof}
  Let, for each $i ≥ 1$, $\equone_i$ be the automatic equivalence
  relation containing the two classes $\{i\}$ and
  $\mathbb{N} \setminus \{i\}$. Then, $\GLB \equone_i = \bot$, which
  is not automatic.
\end{proof}

\begin{lemma}\label{lem:equaut_upward_closed}
  $\equaut$ is upward closed.
\end{lemma}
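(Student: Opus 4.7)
The plan is to show that any $\equtwo$ with $\equtwo \geq \equone$ and $\equone \in \equaut$ is itself automatic, by reducing the lemma to standard closure properties of regular languages.

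First, by Proposition~\ref{prop:automata_silly}, $\equone$ has finitely many classes $E_1, \ldots, E_c$. The characterisation of the lattice order recalled earlier in the paper says that every class of $\equtwo$ is a union of classes of $\equone$, so $\equtwo$ also has finitely many classes $F_1, \ldots, F_k$, each of the form $F_j = \bigcup_{i \in I_j} E_i$ for some partition $I_1, \ldots, I_k$ of $\{1, \ldots, c\}$.

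The key step is to observe that each $E_i$, regarded as a set of binary representations, is a regular language. Fix a DFA $A$ recognising the automatic language of $\equone$, and recall that $q_m$ denotes the state reached after reading $\binrep{m}\square$. Running $A$ from state $q_m$ on the suffix $\binrep{n}$ accepts iff $m \equone n$; hence the DFA obtained from $A$ by designating $q_m$ as its initial state recognises exactly $\{\binrep{n} : n \equone m\}$, showing that each class $E_i$ is regular as a language of binary strings.

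Finite unions of regular languages are regular, so each $F_j$ is regular. Closure under concatenation and finite union then yields that
\[
  \bigcup_{j=1}^{k} F_j \cdot \{\square\} \cdot F_j \;=\; \{\binrep{m}\square\binrep{n} : m \equtwo n\}
\]
is regular, whence $\equtwo \in \equaut$. The only non-routine ingredient is the observation that each class of an automatic equivalence is itself regular; everything else follows from standard closure properties, so no serious obstacle is expected.
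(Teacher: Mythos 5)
Your proof is correct and follows essentially the same route as the paper's: both rest on the observation that $\equtwo$ has finitely many classes, each a finite union of classes of $\equone$, and then invoke closure properties of regular languages to build an automaton. You simply make explicit the step the paper leaves implicit, namely that each class of $\equone$ is itself a regular language (via the left-quotient/initial-state argument).
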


\begin{proof}
  Let $\equone \in \equaut$ and $\equone ≤ \equtwo$. By definition,
  the classes of $\equtwo$ are unions of classes of $\equone$ and
  because there are only finitely many classes in $\equone$, these are
  \emph{finite} unions. Thus, building a DFA for $\equtwo$ is possible.
\end{proof}

\begin{proposition}
  Let $A \subseteq \equaut$ be non-empty. Then, $\LUB A \in \equaut$.
\end{proposition}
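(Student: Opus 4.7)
The plan is to exploit upward-closure (Lemma \ref{lem:equaut_upward_closed}) together with the non-emptiness of $A$. Since $A$ is non-empty, pick any $\equone \in A$. By the definition of the join as a least upper bound, we have $\equone \leq \LUB A$ in $(\thelat, \leq)$, and $\LUB A$ is an equivalence relation on $\bN$ (this holds automatically because $\thelat$ is a complete lattice).

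Now I would simply apply Lemma \ref{lem:equaut_upward_closed}: from $\equone \in \equaut$ and $\equone \leq \LUB A$, it follows that $\LUB A \in \equaut$. Thus no new automaton construction is required; the result is a direct consequence of upward-closure.

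The only subtle point, and what plays the role of the ``main obstacle'', is recognising why the hypothesis that $A$ be non-empty is essential. If $A = \emptyset$, then $\LUB A = \bot$, which by the remark following Proposition \ref{prop:automata_silly} is not in $\equaut$. Non-emptiness is precisely what lets us anchor the argument to \emph{some} automatic equivalence below $\LUB A$, after which upward-closure finishes the job. Note that this proof gives no bound on the size of an automaton for $\LUB A$ in terms of automata for the elements of $A$; indeed, $A$ need not be effectively presented at all, which is consistent with the fact that the result is purely a closure statement (case~\ref{foolmethrice} in the classification after the results table).
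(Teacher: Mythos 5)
Your proof is correct and is essentially the paper's own argument: the paper's entire proof is ``Because $\equone \leq \LUB A$ for any $\equone \in A$,'' i.e.\ pick an element of the non-empty $A$ and invoke upward-closure (Lemma~\ref{lem:equaut_upward_closed}). Your additional remarks on why non-emptiness is needed and on the non-effectiveness of the closure are accurate elaborations, not deviations.
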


\begin{proof}
  Because $\equone ≤ \LUB A$ for any $\equone \in A$.
\end{proof}

As mentioned at the beginning of the Section, the expressive power of
automata varies with the number of tapes or heads. Hence,
small variations of the model can drastically change the property
of the corresponding subset of equivalences. For example, two-tape
automata can test whether both tape contain the same word and thus
decide $\bot$.

Even the choice of representation of inputs may affect the expressive
power. For example, instead of sequencing the inputs
($\binrep{m}\square\binrep{n}$), it is possible to interleave them
($m_1n_1m_2n_2\ldots m_in_i \ldots \square n_k$ if $n$ is longer than
$m$). This representation allows an automaton to decide $\bot$ (note that the Myhill-Nerode
argument does not work in this case).

Moreover, when working with multi-tape automata, the question of
synchronicity arises: should the tapes be ``at the same position'' on
each tape, or can the heads move independently?

Thus, we have only studied one particular class of automata with one
particular way of representing equivalences. $\equaut$ hence enjoys
properties that other classes of ``automatic'' relation may or may not
have. While the class of single-tape, single-head DFAs studied above
is the simplest kind of finite automaton --hence most apt to study
first-- it remains to perform a systematic study of equivalence
relations decidable by other, more particular, kinds of automata.

\section{Subrecursive equivalence relations}\label{sec:subrecursive}
We now treat classes of equivalence relations decidable within bounds
on their resources. The definition of such classes are simply the
standard definitions of computational complexity theory using Turing
machines with two input tapes. The sets of equivalence relations on
$\bN$ consisting of \logspace-, \ptime-, \pspace-, \exptime-,
primitive recursive, … decidable equivalence relations are denoted by
$\equl$, $\equp$, $\equpspace$, $\equexp$, $\equpr$, … The notation is
extended in straightforward ways and we collectively call these
``subrecursive'' sets of equivalences.

\begin{lemma}
  The subrecursive sets of equivalences are closed under finite
  \glbname.
\end{lemma}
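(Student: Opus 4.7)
The plan is to argue that the statement reduces to the fact that each of the standard subrecursive classes (\logspace, \ptime, \pspace, \exptime, primitive recursive, etc.) is closed under taking the conjunction of two decision procedures, together with the elementary observation that $m\,(\equone \glb \equtwo)\,n$ iff $m \equone n$ \logicand $m \equtwo n$.

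First, I would fix a subrecursive class \compclass{C} (with associated resource bound $f$) and let $\equone, \equtwo \in \equspec{\compclass{C}}$, witnessed by two-input Turing machines $M_\equone$ and $M_\equtwo$ both respecting the bound $f$. I would then construct a machine $M$ which, on input $(m,n)$, first simulates $M_\equone$ on $(m,n)$, remembering only the single accept/reject bit, then simulates $M_\equtwo$ on $(m,n)$, and finally accepts iff both simulations accepted. By the characterization of the \glbname recalled earlier in the paper (classes of $\equone \glb \equtwo$ are the non-empty pairwise intersections of classes of \equone and \equtwo), $M$ decides $\equone \glb \equtwo$.

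Next, I would check the resource accounting class by class. For time-bounded classes closed under sum and composition with small polynomials (\ptime, \pspace regarded as time, \exptime, primitive recursive), the total running time of $M$ is bounded by the sum of the running times of $M_\equone$ and $M_\equtwo$ plus constant overhead, and each of these classes is closed under such sums. For space-bounded classes (\logspace, \pspace), I would reuse the work tapes: after $M_\equone$ halts, the work tape is wiped before $M_\equtwo$ runs, so the total space used is the maximum of the two space bounds, which stays within the class. The input tapes are read-only and shared, so no extra input space is consumed. For primitive recursive deciders, the characteristic function of $\equone \glb \equtwo$ is simply the product (or logical AND, which is primitive recursive) of the two characteristic functions, so it is primitive recursive.

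The main (and only) obstacle is the minor bookkeeping required to show uniformity across the list of classes, \emph{i.e.} that each of them is genuinely closed under the sequential composition described above. Since finite meets boil down to iterating the binary case, the result then extends to arbitrary finite \glbname by a trivial induction on the number of equivalence relations involved.
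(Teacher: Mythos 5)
Your proposal is correct and follows the same route as the paper: the paper's proof is the one-line observation that $m\,(\equone \glb \equtwo)\,n$ iff $m \equone n \logicand m \equtwo n$ and that the subrecursive classes are closed under $\logicand$, which is exactly the fact you establish, just with the machine construction and resource accounting spelled out explicitly. No gap.
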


\begin{proof}
  $m (\equone \glb \equtwo) n$ iff $m \equone n \logicand m \equtwo
  n$, and the subrecursive classes are closed under $\logicand$.
\end{proof}

\begin{proposition}\label{prop:logspace_not_upper}
  There exist two equivalence relations, decidable in logarithmic
  space, whose \lubname is undecidable.

  Thus, the subrecursive sets of equivalences are not closed under
  join and are not sublattices.
\end{proposition}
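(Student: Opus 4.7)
The plan is to reduce the halting problem to deciding $\equone \lub \equtwo$ for two carefully chosen log-space decidable equivalences $\equone$ and $\equtwo$. The idea is to encode one computation step of a universal Turing machine as a single $\equone$- or $\equtwo$-step, so that unfolding the definition of $\lub$ amounts to simulating the machine; since halting is undecidable, so too will be membership in the join.

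I would first fix a reversible universal Turing machine $M$ with one-step transition function $\sigma$ (so $\sigma$ is injective) and a distinguished halting configuration $c_{\mathrm{halt}}$ that $\sigma$ fixes. Configurations paired with step counters are encoded as natural numbers via a standard log-space decodable pairing function, which I write $\langle c, k \rangle$. I then define $\equone$ to be the reflexive, symmetric, transitive closure of the pairs $\{\langle c, 2k \rangle, \langle \sigma(c), 2k+1 \rangle\}$ (for all configurations $c$ and all $k \in \bN$), together with the pairs $\{\langle c_{\mathrm{halt}}, k \rangle, \langle c_{\mathrm{halt}}, 0 \rangle\}$ (for all $k$); and $\equtwo$ to be the analogous closure of $\{\langle c, 2k+1 \rangle, \langle \sigma(c), 2k+2 \rangle\}$. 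Because $\sigma$ is injective and fixes $c_{\mathrm{halt}}$, the non-singleton classes of $\equone$ are exactly the two-element ``one-step'' sets and the single halt-absorbing class $\{\langle c_{\mathrm{halt}}, k \rangle : k \in \bN\}$; similarly for $\equtwo$, but without the halt-absorbing class.

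Next I would verify that both $\equone$ and $\equtwo$ lie in $\equl$: on input $\langle c, i \rangle$ and $\langle c', j \rangle$, a log-space decider unpacks the components, checks that $|i - j| \leq 1$ with the appropriate parity for the chosen relation, and verifies $c' = \sigma(c)$ or $c = \sigma(c')$ by simulating one transition of $M$ cell-by-cell using $O(\log n)$ pointers; the halt-absorbing class is dispatched by simply comparing both configurations with $c_{\mathrm{halt}}$. The crucial claim is then that $\langle c_x, 0 \rangle\,(\equone \lub \equtwo)\,\langle c_{\mathrm{halt}}, 0 \rangle$ holds iff $M$ halts on $x$, where $c_x$ is the initial configuration of $M$ on $x$. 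One direction is immediate: a halting computation $c_x, \sigma(c_x), \ldots, \sigma^k(c_x) = c_{\mathrm{halt}}$ yields the chain $\langle c_x, 0 \rangle, \langle \sigma(c_x), 1 \rangle, \ldots, \langle c_{\mathrm{halt}}, k \rangle, \langle c_{\mathrm{halt}}, 0 \rangle$, alternating $\equone$ and $\equtwo$ by parity and closing with the halt-absorbing $\equone$-class. For the converse, the shape of the classes forces any chain to travel along $\langle \sigma^\ell(c_x), \ell \rangle$ and to reach the halt-absorbing class only when $\sigma^\ell(c_x) = c_{\mathrm{halt}}$ for some $\ell$. Undecidability of halting then transfers to $\equone \lub \equtwo$.

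The main obstacle is ruling out spurious chains that do not reflect genuine $M$-computations: if two distinct configurations shared a common $\sigma$-image, the transitive closure defining $\equone$ (or $\equtwo$) would fuse their classes and permit chains that jump between unrelated computation histories, producing false halting witnesses. Reversibility of $M$ is precisely what prevents this, and making $c_{\mathrm{halt}}$ a $\sigma$-fixed point guarantees that the halt-absorbing class interacts with no ordinary one-step class, so the only route into $\langle c_{\mathrm{halt}}, 0 \rangle$ from outside its own class is genuine termination.
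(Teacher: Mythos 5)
Your approach is essentially the paper's: split a clocked one-step relation of a (universal) Turing machine into two relations according to the parity of the step counter, so that each is logspace-decidable because its chains have length at most one, while their \lubname reconstructs the reachability relation of the computation and hence encodes the halting problem. The paper's sketch does exactly this with $\toeven$ and $\toodd$ on pairs $(n,c)$ and a sink $(0,0)$ for final configurations; your halt-absorbing class plays the role of that sink.

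There is one internal inconsistency in your setup as written: you require $\sigma$ to be injective (reversibility) \emph{and} to fix $c_{\mathrm{halt}}$. Together these imply that $c_{\mathrm{halt}}$ has no $\sigma$-preimage other than itself, so no computation started at $c_x \neq c_{\mathrm{halt}}$ ever reaches $c_{\mathrm{halt}}$, and the reduction as stated decides a trivial problem. The standard repair is to let halting configurations have no $\sigma$-successor (or to use the set of configurations in a halting state) instead of making $c_{\mathrm{halt}}$ a fixed point. Note also that reversibility is dispensable: with a merely deterministic $M$, the parity restriction still bounds every chain of the restricted relation to a single step, so each non-singleton class is a ``star'' of configurations sharing their unique successor, and membership remains logspace-decidable; moreover, sideways or backward moves in the join cannot manufacture false halting witnesses, since the forward orbit of any configuration reached that way merges with the original orbit and final configurations, having no ordinary successor, cannot occur as proper preimages. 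This is why the paper's sketch gets by with determinism alone.
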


\begin{proof}[Sketch of proof]
  Let $M$ be a deterministic Turing Machine. We define the
  \emph{clocked one-step} relation by
  $(\binrep{n, c}) \to (\binrep{n', c'})$ iff $n'=n+1$ and
  $\binrep{c'}$ is the (representation of the) configuration resulting
  from executing one step $M$ from $\binrep{c}$; or if $\binrep{c}$ is
  a final configuration and $(\binrep{n', c'}) = (0,0)$. $\to$ is
  decidable in logarithmic space by standard techniques.

  We now define $\toeven$ (resp. $\toodd$) as the restriction of $\to$
  for even (resp. odd) $n$; and $\approxeven$ (resp. $\approxodd$) as
  the reflexive, transitive, symmetric closure of $\toeven$
  (resp. $\toodd$). Because of the parity restriction, it is not
  possible to have
  $(\binrep{n₀, c₀}) \toeven (\binrep{n₁, c₁}) \toeven (\binrep{n₂,
    c₂})$ and thus $\approxeven$ is also decidable in logarithmic
  space.

  Let $\approx = \left(\approxeven \lub \approxodd\right)$. The
  computation starting at $c$ terminates iff
  $(\binrep{n, c}) \approx (0, 0)$. Hence, $\approx$ is not decidable.
\end{proof}

\begin{theorem}\label{the:S_is_closed}
  Let $A$ be a (deterministic) subrecursive set of equivalences larger
  than $\equpspace$ (included) and $\equone \in A$. There is at least
  one complement to $\equone$ in $A$.
\end{theorem}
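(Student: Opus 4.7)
The plan is to build an explicit complement of $\equone$ from a canonical transversal consisting of the minimum element of each $\equone$-class, and to observe that this transversal is decidable in polynomial space, hence inside $A$ whenever $A \supseteq \equpspace$.

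First I would set $T = \setof{x \in \bN}{\forall y < x,\ \neg(y \equone x)}$ and define $x \equtwo y$ iff $x = y$ or both $x, y \in T$. This is manifestly an equivalence relation: the only non-singleton block is $T$ itself, and the remaining blocks are singletons from $\bN \setminus T$. Next I would verify complementation. For the meet, if $x \equone y$ and $x \equtwo y$ with $x \neq y$, then $x$ and $y$ are two distinct elements of $T$ inside a single $\equone$-class, contradicting the fact that $T$ contains only class-minima; hence $\equone \glb \equtwo = \bot$. For the join, given any $x, y \in \bN$, let $r_x$ and $r_y$ be the $\equone$-minima of their respective classes; then $r_x, r_y \in T$ and $x \equone r_x \equtwo r_y \equone y$ witnesses $x\,(\equone \lub \equtwo)\,y$, so $\equone \lub \equtwo = \top$.

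The main obstacle, and the reason the bound is $\equpspace$, is showing that $\equtwo \in A$. To decide $x \equtwo y$ one first tests $x = y$ (trivial), and otherwise tests ``$x \in T$'' and ``$y \in T$'' by iterating a counter $z$ from $0$ to $x - 1$ (resp.\ $y - 1$) and querying the $A$-decider for $\equone$ on $(z, x)$, reporting $x \in T$ exactly when no query returns yes. The counter uses $O(\log x)$ bits, and the workspace for each $\equone$-query is polynomial in the bit length of $(z, x)$ and can be reused across iterations, so the whole procedure runs in polynomial space. This is precisely where the hypothesis $A \supseteq \equpspace$ enters: for $A = \equpspace$ the polynomial-space bound suffices directly, and for any strictly larger deterministic class (for instance $\equexp$ or $\equpr$) the exponential-time cost of the loop is absorbed by the larger resource budget; in the primitive recursive case the loop is a bounded recursion on $x$, which stays primitive recursive. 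A polynomial-time decider could not afford the exponentially many iterations, consistent with the ``$\geq \pspace$'' entry in the results table.
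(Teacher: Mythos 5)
Your proposal is correct and follows essentially the same route as the paper: the complement is the singular equivalence whose non\-singleton class is the set of least elements of the $\equone$-classes, decided by a bounded search over all $k<m$, which costs linear space for the counter (hence the $\equpspace$ threshold) but exponential time. The only cosmetic difference is that you verify the complementation property directly rather than invoking Lemma~\ref{lem:compl-sing}, and the paper makes explicit the closure of deterministic classes under complement that your ``no query returns yes'' step uses implicitly.
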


\begin{proof}[Sketch of proof]
  Let $\equtwo$ be the singular equivalence whose non-singleton class,
  $F$, contains exactly the least element of each class of
  $\equone$. It is a complement to $\equone$ by
  Lemma~\ref{lem:compl-sing}. Because the deterministic subrecursive
  sets are closed under complement, $\overline{\equone} \in A$.

  To decide if $m \equtwo n$, proceed as follows: (i) if
  $\binrep{m} = \binrep{n}$, accept; (ii) if there exists $k < m$ with
  $\binrep{k} \equone \binrep{m}$, reject; (iii) if there exists
  $k' < n$ with $\binrep{k'} \equone \binrep{n}$, reject; (iv) if not
  rejected yet, accept.

  Step (ii) is done by checking if for all $k < m$,
  $k \overline{\equone} m$ and requires $O(\size{\binrep{m}})$ working
  space for writing $k$, hence at least linear space. It also loops
  over $m = 2^{\size{\binrep{m}}}$ values and thus needs exponential
  time.
\end{proof}

It is tempting to conjecture that there are equivalences in $\equp$
with no complement in $\equp$. However if that were the case, we would
have $\equp \neq \equpspace$, and by using a polynomial-time pairing
function $\mathbb{N}^2 \longrightarrow \mathbb{N}$, existence of
$\parti{P} \in \equpspace \setminus \equp$ entails existence of a set
in $\pspace \setminus \ptime$, and hence $\pspace \neq \ptime$; hence,
the conjecture will be exceedingly hard to prove. On the other hand,
It is not clear that equality of $\equp$ and $\equpspace$ would entail
$\mathrm{PSPACE} = \mathrm{P}$.

\section{Arithmetical equivalence relations}
We now investigate closure properties of sets of arithmetical
equivalence relations. Closure under finite \glbname is immediate as
the arithmetical sets are also closed under finite intersections.

\begin{lemma}\label{lem:sigma_closed_by_lub}
  For every $k≥1$, the set of equivalence relations in
  $\sigarith{k}$ is closed under finite \lubname.

  The \lubname of any two $\piarith{k}$ equivalence relations is in
  $\sigarith{k+1}$.
\end{lemma}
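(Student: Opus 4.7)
The plan is to rewrite the join as a single existential quantifier over a code of a finite sequence of naturals, followed by a bounded combination of atomic equivalence tests; the classification then reduces to counting quantifier alternations.

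First I would fix a primitive recursive coding of finite sequences, writing $\mathrm{lh}(\sigma)$ for the length and $(\sigma)_i$ for the $i$-th component. Unfolding the definition of \lubname, $m (\equone \lub \equtwo) n$ holds iff there exists $\sigma$ coding a sequence $a_0, \ldots, a_\ell$ with $a_0 = m$, $a_\ell = n$, and $a_i \equone a_{i+1}$ or $a_i \equtwo a_{i+1}$ for every $i < \ell$. The final condition is a bounded universal quantifier over $i < \mathrm{lh}(\sigma)-1$, while the projections, length and equalities are all $\Delta^0_0$; thus only the atomic tests $a_i \equone a_{i+1}$ and $a_i \equtwo a_{i+1}$ contribute to the arithmetical hierarchy count.

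For the \sigarith{k} claim with $k \geq 1$, each atomic test lies in \sigarith{k}, so does their disjunction, and bounded universal quantification preserves \sigarith{k} for $k \geq 1$ (a standard closure property of the arithmetical hierarchy, obtained by coding the existential witnesses at each index into a single existential). Hence the matrix inside $\exists \sigma$ is \sigarith{k}, and prepending one further existential quantifier still leaves us in \sigarith{k}. The same argument extends to the join of finitely many \sigarith{k} relations $\equone_1, \ldots, \equone_r$ by replacing the binary disjunction with an $r$-ary one, which is still in \sigarith{k}.

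For the \piarith{k} claim, each atomic test lies in \piarith{k}, finite disjunctions and bounded universal quantification preserve \piarith{k}, so the matrix is \piarith{k}; prepending the single existential $\exists \sigma$ produces a formula of the form $\exists \sigma \cdot \psi$ with $\psi \in \piarith{k}$, which is the defining shape of \sigarith{k+1}.

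The only mildly delicate points will be invoking the standard closure of \sigarith{k} and \piarith{k} under bounded quantification and finite Boolean combinations, and ensuring the sequence coding is primitive recursive so that decoding introduces no additional complexity; both are textbook material, so there is no real obstacle beyond this bookkeeping.
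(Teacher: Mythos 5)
Your proof is correct and takes essentially the same route as the paper's: unfold the \lubname as a single existential quantifier over a (coded) finite sequence of intermediate elements followed by a matrix built from the atomic equivalence tests, and then count quantifier alternations to land in $\sigarith{k}$ (resp.\ $\sigarith{k+1}$). You are somewhat more careful than the paper, which writes ``$\exists n, a_1, \ldots, a_n$'' informally and elides both the sequence coding and the closure of $\sigarith{k}$ under bounded universal quantification that your version makes explicit.
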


\begin{proof}
  Let $x$ and $y$ be two integers. By definition,
  $x(\equone \lub \equtwo) y$ iff there exists $a_1, …, a_n$ such that
  $x \equone a_1 \equtwo … \equone a_n \equtwo y$. That is
  $\exists n, a_1, …, a_n . x \equone a_1 \logicand a_1 \equtwo a_2
  \logicand … \logicand a_n \equtwo y$, which is $\sigarith{k}$ (resp.
  $\sigarith{k+1}$) if $\equone$ and $\equtwo$ are both $\sigarith{k}$
  (resp. $\piarith{k}$).
\end{proof}

\begin{proposition}\label{prop:delta_not_closed_by_lub}
  For any $k ≥ 1$, the set of equivalence relations in $\deltarith{k}$
  is not closed under finite \lubname.
\end{proposition}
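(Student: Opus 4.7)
The plan is to relativize the construction from Proposition~\ref{prop:logspace_not_upper}, giving the simulated Turing Machine access to the oracle $\jump{k-1}$. Lemma~\ref{lem:sigma_closed_by_lub} already tells us that the join of any two $\deltarith{k}$ equivalences lies in $\sigarith{k}$, so non-closure amounts to producing two $\deltarith{k}$ equivalences whose join escapes $\piarith{k}$. By Post's Theorem, $\deltarith{k}$ coincides with the class of relations decidable by oracle machines using $\jump{k-1}$, and the halting problem for such machines is $\sigarith{k}$-complete; that is the target I aim to encode as a join.

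First, I would recopy the definition of the clocked one-step relation $\to$ verbatim, but now for a deterministic Turing Machine $M$ equipped with the oracle $\jump{k-1}$. A single step of $M^{\jump{k-1}}$ may involve an oracle query, which is decidable relative to $\jump{k-1}$; hence $\to$ is itself decidable relative to $\jump{k-1}$ and, by Post, lies in $\deltarith{k}$. The parity restrictions $\toeven$ and $\toodd$ inherit the same complexity bound.

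Next, I would transport the parity argument. Because $\to$ increments the first component by one (except when a final configuration collapses to $(0,0)$), two consecutive $\toeven$-edges cannot be chained in the same direction, and the undirected $\toeven$-graph decomposes into disjoint ``stars'' centred on odd-$n$ vertices. Consequently $(m_1, c_1) \approxeven (m_2, c_2)$ iff the pairs coincide, are $\toeven$-neighbours, or share a common $\toeven$-successor---a bounded-depth check that invokes $\to$ only a constant number of times. Hence $\approxeven$ and $\approxodd$ both remain in $\deltarith{k}$.

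Finally, exactly as in the base proposition, the join $\approxeven \lub \approxodd$ relates $(\binrep{n, c})$ to $(0,0)$ iff the computation of $M^{\jump{k-1}}$ from $c$ eventually halts; by the $\sigarith{k}$-completeness of that halting predicate, the join sits in $\sigarith{k} \setminus \piarith{k}$ and so fails to be in $\deltarith{k}$. The chief obstacle is the middle step: the parity-induced collapse of the reflexive-symmetric-transitive closure must be re-verified in the presence of an oracle---this is the one place where the relativization is not purely mechanical---after which the argument recovers Proposition~\ref{prop:logspace_not_upper} as the case $k=1$.
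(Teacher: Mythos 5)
Your proposal is correct and follows essentially the same route as the paper, which itself only sketches the idea of relativizing Proposition~\ref{prop:logspace_not_upper} by replacing the halting problem with the halting problem for machines with oracle in $\deltarith{k}$. You supply the details (the $\jump{k-1}$-relativized clocked one-step relation, the re-verification of the parity collapse, and the $\sigarith{k}$-completeness of the oracle halting problem) that the paper leaves implicit, and they check out.
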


\begin{proof}[Idea of proof]
  The proof is essentially the same as for
  Proposition~\ref{prop:logspace_not_upper}. The ``hard'' problem
  cannot stay the classical halting problem and must be replaced by
  the halting problem for machines with oracle in $\deltarith{k}$.
\end{proof}

\begin{proposition}\label{prop:pi_not_closed_by_lub}
  For every $n \geq 1$, the set of equivalence relations in
  $\piarith{n}$ is not closed under finite \lubname.
\end{proposition}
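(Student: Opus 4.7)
The plan is to directly construct two $\piarith{n}$ equivalence relations whose join is $\sigarith{n+1}$-complete, matching the upper bound of Lemma~\ref{lem:sigma_closed_by_lub}. Fix a $\sigarith{n+1}$-complete set $S$, written as $S = \{x : \exists y.\, R(x, y)\}$ with $R \in \piarith{n}$, and partition $\mathbb{N}$ via a fixed decidable encoding into three disjoint syntactic types: ``seeds'' $s(x)$ for $x \in \mathbb{N}$, ``witnesses'' $w(x, y)$ for $x, y \in \mathbb{N}$, and a single distinguished ``target'' element $\star$.

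I would define $\equone$ as the reflexive-symmetric-transitive closure of the generating relation $s(x) \sim w(x, y) \iff R(x, y)$; its non-trivial classes are then exactly $\{s(x)\} \cup \{w(x, y) : R(x, y)\}$ for each $x$ admitting some witness. Membership $a \equone b$ reduces, via a syntactic case analysis on the types of $a$ and $b$, to a finite boolean combination of $R(\cdot, \cdot)$ predicates, and hence lies in $\piarith{n}$. I would define $\equtwo$ as the singular equivalence whose unique non-singleton class is $\{\star\} \cup \{w(x, y) : x, y \in \mathbb{N}\}$; class membership is a purely syntactic check, so $\equtwo$ is decidable and in particular $\piarith{n}$.

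Setting $\approx = \equone \lub \equtwo$, the crucial observation is that for each seed $s(x)$ the only non-trivial $\equone$-neighbours are witnesses $w(x, y)$ with $R(x, y)$, while seeds are $\equtwo$-singletons and every witness is $\equtwo$-adjacent to $\star$. The shortest and only way to reach $\star$ from $s(x)$ is therefore $s(x) \equone w(x, y) \equtwo \star$, giving $s(x) \approx \star \iff \exists y.\, R(x, y) \iff x \in S$. If $\approx$ lay in $\piarith{n}$, then $\{x : s(x) \approx \star\} = S$ would too; but $\sigarith{n+1}$-completeness of $S$ combined with closure of $\piarith{n}$ under $m$-reductions would force $\sigarith{n+1} \subseteq \piarith{n}$, contradicting the strictness of the arithmetic hierarchy ($\piarith{n} \subsetneq \sigarith{n+1}$).

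The main obstacle is ruling out accidental chains in $\approx$ that would relate $s(x)$ to $\star$ for $x \notin S$. This reduces to checking that the generating relation $s(x) \sim w(x, y)$ never connects elements of different types beyond what is intended, so that a seed $s(x)$ outside $S$ ends up in a $\equone$-singleton; combined with the fact that $s(x)$ is also a $\equtwo$-singleton, no chain of any length can carry it to $\star$. All of this follows cleanly from the strict syntactic separation of seeds, witnesses, and $\star$ in the encoding, so while the obstacle is the most delicate point to get right in a fully formal proof, it admits a clean verification.
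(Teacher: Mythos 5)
Your construction is correct, but it takes a genuinely different route from the paper. The paper disposes of this proposition in one line: by Lemma~\ref{lem:sigma_closed_by_lub} the set of $\sigarith{k}$ equivalences is closed under finite \lubname, and by Proposition~\ref{prop:delta_not_closed_by_lub} the set of $\deltarith{k}$ equivalences is not; since $\deltarith{k} = \sigarith{k} \cap \piarith{k}$, a pair of $\deltarith{k}$ relations witnessing the latter has a join that is $\sigarith{k}$ but not $\deltarith{k}$, hence not $\piarith{k}$. That argument is essentially free given the surrounding results, but it ultimately rests on Proposition~\ref{prop:delta_not_closed_by_lub}, which the paper itself only sketches by relativising the halting-problem construction of Proposition~\ref{prop:logspace_not_upper}. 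Your direct construction --- seeds, witnesses for a $\piarith{n}$ matrix $R$ of a $\sigarith{n+1}$-complete set, and a target $\star$ collecting all witnesses --- is self-contained modulo standard facts about the arithmetical hierarchy, and it buys something strictly stronger: the join you build is $\sigarith{n+1}$-hard, so it shows the upper bound of Lemma~\ref{lem:sigma_closed_by_lub} for joins of $\piarith{n}$ relations is tight, whereas the paper's witnesses have a join that is still $\sigarith{k}$. One small point of care: where you say membership in $\equone$ reduces to a ``finite boolean combination'' of $R$-instances, you should say \emph{positive} boolean combination (guarded by decidable syntactic case distinctions), since $\piarith{n}$ is closed under conjunction and disjunction but not negation; your case analysis does in fact only produce conjunctions of $R$-instances, so the claim stands.
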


\begin{proof}
  Because $\deltarith{k} = \sigarith{k} \cap \piarith{k}$ is not, but
  $\sigarith{k}$ is.
\end{proof}

\begin{theorem}
  For every $k ≥ 1$, there are equivalence relations in $\sigarith{k}$
  none of whose complements are in $\sigarith{k}$.
\end{theorem}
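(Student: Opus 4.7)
The plan is to construct an explicit singular equivalence in $\sigarith{k}$ whose non-singleton class is a $\sigarith{k}$-complete set, and to use Lemma~\ref{lem:compl-sing} to show that any complement in $\sigarith{k}$ would let us define the complement of a $\sigarith{k}$-complete set using only $\sigarith{k}$ predicates, contradicting the standard fact $\sigarith{k} \neq \piarith{k}$ for $k \geq 1$.

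More precisely, fix any $\sigarith{k}$-complete set $E \subseteq \bN$ (classical hierarchy theorems provide such an $E$), and let $\equone$ be the singular equivalence whose unique non-singleton class is $E$. Since $m \equone n$ iff $m = n$ \logicor $(m \in E \logicand n \in E)$, the relation $\equone$ is itself $\sigarith{k}$. Now assume, for contradiction, that some $\equtwo \in \sigarith{k}$ is a complement to $\equone$. By Lemma~\ref{lem:compl-sing}, every class of $\equtwo$ contains \emph{exactly} one element of $E$.

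The key observation is then the equivalence
\[
  x \notin E \quad \Longleftrightarrow \quad \exists y\bigl(y \neq x \logicand y \equtwo x \logicand y \in E\bigr).
\]
The left-to-right direction uses that every $\equtwo$-class contains \emph{at least} one element of $E$ (so when $x \notin E$, that representative is some $y \neq x$); the right-to-left direction uses that every $\equtwo$-class contains \emph{at most} one element of $E$ (so the existence of such a $y \in E$ with $y \neq x$ and $y \equtwo x$ precludes $x \in E$). The right-hand side is obtained by conjoining a decidable predicate with two $\sigarith{k}$ predicates ($y \equtwo x$ and $y \in E$) and then applying an existential quantifier; all of these operations preserve $\sigarith{k}$. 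Consequently $\bN \setminus E \in \sigarith{k}$, contradicting $\sigarith{k}$-completeness of $E$ (which forces $\bN \setminus E$ to be properly $\piarith{k}$).

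The only delicate step is the biconditional above; this is where both halves of Lemma~\ref{lem:compl-sing} are used, and I expect it to be the main point to verify carefully. Everything else — existence of a $\sigarith{k}$-complete $E$, $\equone \in \sigarith{k}$, and closure of $\sigarith{k}$ under $\logicand$ and bounded existential quantification — is standard.
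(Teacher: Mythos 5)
Your proof is correct and follows essentially the same route as the paper's: take a singular equivalence whose non-singleton class is a $\sigarith{k}$ set with non-$\sigarith{k}$ complement, and use Lemma~\ref{lem:compl-sing} to express $\bN \setminus E$ as $\exists y\,(y \neq x \logicand y \equtwo x \logicand y \in E)$, which would be $\sigarith{k}$ if $\equtwo$ were. (One cosmetic slip: the quantifier $\exists y$ is unbounded, not bounded, but $\sigarith{k}$ is closed under unbounded existential quantification for $k \geq 1$, so nothing breaks.)
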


\begin{proof}[Sketch of proof]
  Let $E$ be a $\sigarith{k}$ set whose complement is not
  $\sigarith{k}$, and $\equone$ be the singular equivalence with
  non-singleton class $E$. Let $\equtwo$ be a complement to $\equone$,
  each of its class intersects $E$ in exactly one point, hence
  $x \in \overline{E}$ iff
  $\exists e . e \neq x \logicand e \in E \logicand e \equtwo x$, and
  $\equtwo$ cannot be $\sigarith{k}$.
\end{proof}

\begin{theorem}
  Let $k ≥ 0$. Every equivalence relation in $\deltarith{k}$ has at
  least one complement in $\deltarith{k}$.
\end{theorem}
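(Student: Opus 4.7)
The plan is to adapt the argument already used for Theorem~\ref{the:S_is_closed}: construct the complement as the singular equivalence whose non-singleton class consists of the $\equone$-least representatives of the classes of $\equone$, and then verify that membership in this complement can be tested by a $\deltarith{k}$ formula.

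More precisely, given $\equone \in \deltarith{k}$, let $F = \{m \in \bN : \forall j < m.\ \neg(j \equone m)\}$, i.e., the set of $\equone$-least elements of each class, and let $\equtwo$ be the singular equivalence whose non-singleton class is $F$. Since $F$ contains exactly one element of each class of $\equone$, Lemma~\ref{lem:compl-sing} gives that $\equtwo$ is a complement of $\equone$. It remains to check that $\equtwo \in \deltarith{k}$. Directly, $m \equtwo n$ iff $m = n$ or ($m \neq n$ and both $m \in F$ and $n \in F$). Membership $m \in F$ is defined by a bounded universal quantifier over the $\deltarith{k}$ predicate $\equone$, and $\deltarith{k}$ (for $k \geq 1$) is closed under Boolean combinations and bounded quantification; for $k = 0$ this is just the fact that decidable predicates are closed under the same operations. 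Hence $\equtwo$ is $\deltarith{k}$.

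The main obstacle is essentially bookkeeping rather than mathematical: one must make sure the definition of $\equtwo$ only uses operations preserved inside $\deltarith{k}$. The key closure property is closure under bounded universal quantification, which holds throughout the arithmetical hierarchy, so no jump in complexity occurs (contrary to what happens for \lubname in Lemma~\ref{lem:sigma_closed_by_lub}). Note also the contrast with Theorem~\ref{the:S_is_closed}: there, bounded quantification over $m$ can take exponential time in $\size{\binrep{m}}$, forcing the $\equpspace$ lower bound; in the arithmetical setting there are no such resource constraints, so the same construction works uniformly for all $k \geq 0$.
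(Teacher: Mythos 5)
Your proposal is correct and follows exactly the route the paper intends: the paper's own justification is the one-line remark that the proof is ``essentially the same as for Theorem~\ref{the:S_is_closed}'', i.e., take the singular equivalence on the least representatives of the classes of $\equone$, invoke Lemma~\ref{lem:compl-sing}, and observe that the bounded quantification keeps the definition inside $\deltarith{k}$. You have simply spelled out the details the paper leaves implicit, including the correct closure properties (Boolean combinations and bounded quantifiers) that make the level of the hierarchy not increase.
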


The proof is essentially the same as for Theorem~\ref{the:S_is_closed}.

\section{Infinite \glbname and \lubname}
Recall that the upper set \upset{n} contains $n$ and all the elements
greater than $n$. We say that an equivalence is \emph{small} if it
has finitely many classes. Note that small singular equivalence
relations have finitely many singleton classes, whose largest element
is $N$ and the non-singleton class is thus the union of a finite set
and the upper set \upset{N+1}.

\begin{proposition}\label{prop:get_any_singular}
  Let $I \subseteq \mathbb{N}$, there is a set of small singular
  equivalences whose \glbname is singular with non-singleton class
  $I$. Thus, none of the subrecursive or arithmetical sets of
  equivalences are closed under arbitrary \glbname.
\end{proposition}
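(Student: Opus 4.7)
The plan is to take, for each $n \notin I$, the simplest possible small singular equivalence that isolates $n$, and then intersect them all. Concretely, for every $n \in \bN \setminus I$, let $\equone_n$ be the equivalence relation with exactly two classes, $\{n\}$ and $\bN \setminus \{n\}$. Each $\equone_n$ has two classes (so is small) and exactly one non-singleton class (so is singular), and $I$ is contained in the non-singleton class of every $\equone_n$ because $n \notin I$. I would also remark in passing that the statement tacitly assumes $|I| \geq 2$, since otherwise "singular with non-singleton class $I$" is not meaningful; for $|I| \leq 1$ the meet below would simply be $\bot$.

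Next I would verify that $\equthree := \GLB_{n \notin I} \equone_n$ is the singular equivalence whose non-singleton class is $I$. For any distinct $x, y \in I$, both lie in $\bN \setminus \{n\}$ for every $n \notin I$, hence $x \equone_n y$ for every $n$ and therefore $x \equthree y$. Conversely, if $x \notin I$ and $y \neq x$, then in $\equone_x$ the element $x$ sits alone in $\{x\}$ while $y$ sits in $\bN \setminus \{x\}$, so $x \not\equone_x y$ and hence $x \not\equthree y$. This shows that $\equthree$ relates exactly the pairs in $(I \times I) \cup \{(m,m) : m \in \bN\}$, which is the singular equivalence with non-singleton class $I$.

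For the second sentence of the proposition, I would observe that each $\equone_n$ is decided by the trivial test $m \equone_n p$ iff $m = p \logicor (m \neq n \logicand p \neq n)$, so the whole family $\{\equone_n\}_{n \notin I}$ lies in any of the subrecursive or arithmetical classes considered in the paper (\equl, \equp, \ldots, $\deltarith{k}$, $\sigarith{k}$, $\piarith{k}$). However, from $\equthree$ one recovers $I$: fixing any reference $x_0 \in I$, one has $x \in I$ iff $x = x_0 \logicor x \equthree x_0$, so $\equthree$ is computationally at least as hard as $I$ itself. Choosing $I$ outside the class of interest — e.g.\ undecidable for subrecursive classes, or outside $\sigarith{k}$, $\piarith{k}$, or $\deltarith{k}$ for the arithmetical hierarchy — gives a family inside the class whose \glbname is outside, establishing the non-closure claim.

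There is no real obstacle here: the construction is direct and the complexity transfer is immediate once one notes that an equivalence whose only non-trivial class is $I$ encodes $I$ essentially verbatim. The only point requiring a brief remark is the handling of $|I| \leq 1$, which is a degenerate boundary case rather than a substantive difficulty.
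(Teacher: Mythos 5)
Your proof is correct and takes essentially the same route as the paper: both arguments write $I$ as a countable intersection of cofinite sets and take the \glbname of the corresponding small singular equivalences --- the paper uses the sets $F^+_i = (I \cap [0;f_i[) \cup \upset{f_i}$ where you use $\bN \setminus \{n\}$ for $n \notin I$, a marginally simpler choice of the same kind of family. Your added remarks on the degenerate case $\size{I} \leq 1$ and on the explicit reduction from membership in $I$ to the resulting meet are sound details that the paper leaves implicit.
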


\begin{proof}[Sketch of proof]
  Let $I \subseteq \bN$. Let $(f_i)_{i \in \bN}$ be a strictly
  increasing sequence and let $F_i = I \inter \left[0;f_i\right[$ and
  $F^+_i = (F_i \union \upset{f_i})$. Note that $I = \inter_i
  F^+_i$. Let $\equ{F}_i$ be the small singular equivalence with
  non-singleton class $F^+_i$.

  $\GLB \equtwo_i$ is the singular equivalence whose non-singleton
  class is $I$.
\end{proof}

\begin{proposition}
  Let $I \subseteq \bN$. There exists a set of atoms whose \lubname is
  singular with non-singleton class $I$.

  Thus, none of the arithmetical sets of equivalences are closed under
  arbitrary \lubname.
\end{proposition}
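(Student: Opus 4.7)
The plan is to construct the desired set of atoms explicitly from $I$ and then derive non-closure by choosing $I$ to be non-arithmetical. Assume $\size{I} \geq 2$; otherwise the statement is trivially witnessed by the empty join $\LUB \emptyset = \bot$. Pick any distinguished $i_0 \in I$ and, for every $i \in I \setminus \{i_0\}$, let $\equ{A}_i$ denote the atom whose unique non-singleton class is the pair $\{i_0, i\}$. I would then verify via the finite-chain characterisation of \lubname that $\equ{E} := \LUB_{i \in I \setminus \{i_0\}} \equ{A}_i$ is the singular equivalence whose non-singleton class is exactly $I$: every non-trivial step $x \equ{A}_i y$ in a chain forces $\{x,y\} = \{i_0,i\} \subseteq I$, so only elements of $I$ can appear in non-trivial chains; conversely any two distinct $a, b \in I$ are connected by the length-two chain $a \equ{A}_a i_0 \equ{A}_b b$.

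For the second assertion I would take $I \subseteq \bN$ that is not arithmetical, which exists by a cardinality argument since only countably many subsets of $\bN$ are arithmetical. Each atom $\equ{A}_i$ is decidable (membership in a two-element set), hence it lies in every $\sigarith{k}$, $\piarith{k}$ and $\deltarith{k}$. On the other hand, $I$ can be recovered from $\equ{E}$ by the formula $I = \{i_0\} \union \{n : \exists m (m \neq n \logicand n \equ{E} m)\}$. If $\equ{E}$ were $\sigarith{k}$ (resp.\ $\piarith{k}$) for some $k$, this definition would place $I$ in $\sigarith{k}$ (resp.\ $\sigarith{k+1}$), contradicting the choice of $I$. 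Hence $\equ{E}$ lies in no arithmetical class, while every $\equ{A}_i$ does; so no arithmetical set of equivalences can be closed under arbitrary \lubname.

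There is no genuinely hard step here. The chain argument for the first paragraph is an elementary unfolding of the definition of \lubname, and the recovery formula for $I$ is a single bounded existential. The only things to be mindful of are the degenerate case $\size{I} \leq 1$ and checking that the extra existential quantifier in the recovery formula raises the arithmetical level by at most one, which suffices since $I$ is assumed to lie outside the entire arithmetical hierarchy.
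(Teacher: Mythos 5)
Your proof is correct and follows the same route as the paper: the paper's proof simply cites atomisticity of the lattice and the decidability of atoms, while you make the witnessing set of atoms explicit (the ``star'' of pairs $\{i_0,i\}$ for $i \in I$) and verify the chain argument and the arithmetical recovery of $I$ by hand. This is the same idea, just spelled out in more detail, and the details check out, including the degenerate case $\size{I}\leq 1$.
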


\begin{proof}
  Because the lattice is atomistic. Atoms (singular equivalences whose
  non-singleton set has only two elements) are decidable in zero space
  and linear time (by encoding the two elements in the states of the
  machine).
\end{proof}

The previous results assume that the set of equivalence relations may
be chosen arbitrarily. Especially, it is based on a set $I$ of
arbitrary difficulty. This means that the results boil down to the
fact that there are uncountably many such $I$, while there are only
countably many arithmetical relations.

We now look at what happens if the set of relations itself is to have
some bound. Specifically, we are concerned with the \glbname and
\lubname of a $\sigarith{k}$ set of equivalences. $\sigarith{k}$ (and
especially, recursively enumerable) is a sensible bound: It means that
one can enumerate each of the equivalences (with a proper oracle, or
none for recursively enumerable) until all the needed ones have been
found. On the other hand, a $\piarith{k}$ set of equivalences would
mean that one can enumerate (with oracle) the complement to this set,
which is less practical.

\begin{proposition}
  For all $k ≥ 1$, the \lubname of a $\sigarith{k}$ set of
  $\sigarith{k}$ equivalences is a $\sigarith{k}$ equivalence.
\end{proposition}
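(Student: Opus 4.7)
The plan is to express $\LUB A$ directly as a $\sigarith{k}$ formula, mirroring the proof of Lemma~\ref{lem:sigma_closed_by_lub} but with one additional existential over the equivalence used at each chain step. I would fix a standard uniform enumeration $(\equone_e)_{e\in\bN}$ of $\sigarith{k}$ equivalence relations for which the ternary predicate ``$x \equone_e y$'' is itself $\sigarith{k}$ in $(e,x,y)$. Under such a coding, ``$A$ is a $\sigarith{k}$ set of $\sigarith{k}$ equivalences'' means that the index set $I_A = \{e : \equone_e \in A\}$ is $\sigarith{k}$.

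Next I would unfold the chain definition of join: $x (\LUB A) y$ iff there exist an integer $n$, a finite sequence $\langle a_0,\ldots,a_n\rangle$ with $a_0=x$ and $a_n=y$, and a tuple of codes $\langle e_1,\ldots,e_n\rangle$ with each $e_j\in I_A$, such that $a_{j-1}\equone_{e_j}a_j$ for every $1\le j\le n$. Using a standard primitive-recursive coding of finite sequences by single natural numbers, this becomes an outer existential quantifier over the two tuples followed by a bounded universal quantifier ``$\forall j<n$'' whose matrix is the conjunction of the two $\sigarith{k}$ predicates ``$e_{j+1}\in I_A$'' and ``$a_j\equone_{e_{j+1}}a_{j+1}$''.

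To conclude, I would invoke the standard closure properties of $\sigarith{k}$: finite conjunctions and unbounded existential quantification both preserve $\sigarith{k}$, and the bounded universal quantifier of a $\sigarith{k}$ predicate is again $\sigarith{k}$. The last is the only mildly nontrivial ingredient: it rests on the ``collect the witnesses'' trick, rewriting $\forall j<n.\,\exists y.\,\psi(j,y)$ (with $\psi\in\piarith{k-1}$) as $\exists\vec y.\,\forall j<n.\,\psi(j,y_j)$, using that $\piarith{k-1}$ is itself closed under bounded universal quantification. All three closure facts apply uniformly for every $k\ge 1$.

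I expect the main obstacle to be chiefly notational rather than conceptual: checking that the chosen enumeration is genuinely uniform in the required sense, and that the sequence coding does not secretly raise the arithmetical complexity. Beyond that, the argument is a direct generalisation of Lemma~\ref{lem:sigma_closed_by_lub}, the novelty being only that the particular equivalence used at each chain step is now existentially quantified from the $\sigarith{k}$ index set $I_A$, which harms neither the outer existential nor the bounded universal step.
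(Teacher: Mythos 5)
Your proposal is correct and follows essentially the same route as the paper: unfold the chain definition of the join into a single existential formula whose matrix asserts that each index lies in the $\sigarith{k}$ index set and that each consecutive pair is related by the corresponding $\sigarith{k}$ relation. You merely make explicit two points the paper's one-line formula leaves implicit --- the need for a uniform enumeration so that ``$x\,Φ_e\,y$'' is $\sigarith{k}$ in all three arguments, and the fact that the variable-length conjunction is really a bounded universal quantifier handled by the witness-collection trick --- which is a welcome tightening rather than a different argument.
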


\begin{proof}
  Let $\setof{Φ_i}{i \in I}$ be a $\sigarith{k}$ set of $\sigarith{k}$
  equivalences, that is $I$ is a $\sigarith{k}$ set of integers and
  each $Φ_i$ is a $\sigarith{k}$ equivalence. Let
  $\equone = \LUB Φ_i$. By definition, we have $x \equone y$ if and
  only if
  \[
  \exists n, a_1, …, a_n, x_1, …, x_n . \quad a_1 \in I \logicand …
  \logicand a_n \in I \quad \logicand x Φ_{a_1} x_1 \logicand x_1
  Φ_{a_2} x_2 \logicand … \logicand x_n Φ_{a_n} y
  \]
\end{proof}

Note that because $i ≤ j$ implies $\sigarith{i} \subset \sigarith{j}$,
we immediately have that the \lubname of a $\sigarith{i}$ set of
$\sigarith{j}$ equivalences is $\sigarith{j}$. In particular, the
\lubname of a recursively enumerable set of $\sigarith{j}$
equivalences is a $\sigarith{j}$ equivalence relation.

\begin{proposition}
  For all positive integers $n$, there exists a $\sigarith{n}$ set of
  $\sigarith{n}$ equivalence relations whose \glbname is not
  $\sigarith{n}$.
\end{proposition}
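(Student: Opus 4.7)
The plan is to reuse the template of Proposition~\ref{prop:get_any_singular}: manufacture a \glbname whose non-singleton class is a prescribed set $A$ of complexity just beyond $\sigarith{n}$. The delicate point -- and the step I expect to be the main obstacle -- is that the small singular equivalences built in Proposition~\ref{prop:get_any_singular} would require $\piarith{n}$ access to $A$ to compute $I \cap [0;f_i)$, so each of them would end up $\piarith{n}$ rather than $\sigarith{n}$. I will bypass this by using non-small singular equivalences whose unique non-singleton class is already $\sigarith{n-1}$-definable, so that every component of the family lands safely inside $\sigarith{n}$.

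Concretely, fix $n \geq 1$ and, by the arithmetical hierarchy theorem, pick an infinite set $A \in \piarith{n} \setminus \sigarith{n}$. Since $\piarith{n} = \forall \sigarith{n-1}$ (reading $\sigarith{0}$ as the decidable sets when $n=1$), one may write $A = \{x : \forall y\, Q(x,y)\}$ for some uniformly $\sigarith{n-1}$ predicate $Q$, and set $C_y = \{x : Q(x,y)\}$. The $C_y$ are uniformly $\sigarith{n-1}$ and satisfy $A = \bigcap_y C_y$. For each $y$, let $\equ{E}_y$ be the singular equivalence whose non-singleton class is $C_y$: $x \equ{E}_y x'$ holds iff $x = x'$ or $x, x' \in C_y$. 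Each $\equ{E}_y$ is then $\sigarith{n-1} \subseteq \sigarith{n}$, and the family indexed by $y \in \bN$ is uniformly so, yielding an honest $\sigarith{n}$ set of $\sigarith{n}$ equivalences.

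A direct unfolding along the lines of Proposition~\ref{prop:get_any_singular} shows that, for $x \neq x'$, $x\,(\GLB_y \equ{E}_y)\,x'$ holds iff $\forall y (x, x' \in C_y)$, iff $x, x' \in A$; hence the \glbname is the singular equivalence whose non-singleton class is $A$. If this equivalence were $\sigarith{n}$, then the characterisation $x \in A$ iff $\exists x' \neq x,\ x\,(\GLB_y \equ{E}_y)\,x'$ -- valid since $A$ is infinite and hence $|A| \geq 2$ -- would put $A$ in $\sigarith{n}$, contradicting the choice of $A$. The only residual bookkeeping is to keep $A$ infinite so that the \glbname is genuinely singular rather than collapsing to $\bot$, which is free for any standard $\piarith{n} \setminus \sigarith{n}$ witness; no further obstacle arises.
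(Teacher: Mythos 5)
Your proof is correct, and it follows the same template as the paper's own argument: realise a set lying outside $\sigarith{n}$ as a countable intersection of uniformly $\sigarith{n}$ sets, encode each member of the family as a singular equivalence, and observe that the \glbname is the singular equivalence whose non-singleton class is the intersection, from which the hard set is recovered inside $\sigarith{n}$ by a single existential quantifier. The only genuine difference is the choice of witness family. The paper takes $A_k$ to be the (codes of) $\sigarith{n}$ formulae true at $k$, so the non-singleton class of the meet is the set of $\sigarith{n}$ tautologies, a $\piarith{n+1}$-complete set; you instead peel off the outermost universal quantifier of an arbitrary infinite $A \in \piarith{n} \setminus \sigarith{n}$, so your meet lands only at $\piarith{n}$. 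Both suffice for the statement as written. Your version is slightly more economical -- the component equivalences are even $\sigarith{n-1}$, and no universal enumeration of $\sigarith{n}$ formulae is needed, just the normal form and hierarchy theorems -- whereas the paper's version yields the stronger conclusion that the meet can escape all the way to a $\piarith{n+1}$-complete relation. Your explicit insistence that $A$ be infinite, so that the recovery step $x \in A \iff \exists x' \neq x \,.\, x \,(\bigwedge_y \equ{E}_y)\, x'$ is valid and the meet is genuinely singular, is a detail the paper's sketch glosses over.
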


\begin{proof}[Sketch of proof]
  We consider an encoding of formulae into numbers and let $I$ be the
  set of encoding of $\sigarith{k}$ formulae, it is decidable if the
  encoding allows to count quantifiers.  Let
  $A_k = \setof{i \in I}{Φ_i(k)}$ be the set of $\sigarith{n}$
  equivalence accepting $k$, it is $\sigarith{n}$. Let $\equ{A}_k$ be
  the singular equivalence whose non-singleton set is $A_k$, it is a
  $\sigarith{k}$ equivalence.

  Now, $\GLB \equ{A}_k$ is singular with non-singleton class the
  (encoding of) $\sigarith{k}$ tautologies. This is a
  $\piarith{k+1}$-complete set.
\end{proof}

\begin{proposition}
  For all $k$, the \glbname of a $\sigarith{k}$ set of $\piarith{k}$
  equivalence relations is $\piarith{k}$.
\end{proposition}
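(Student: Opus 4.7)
The plan is to mirror the structure of the proof of the preceding proposition (on the \lubname of a $\sigarith{k}$-indexed family of $\sigarith{k}$ equivalences), but dualised: the \glbname naturally pairs with the universal-quantifier flavour of $\piarith{k}$, whereas the \lubname paired with the existential flavour of $\sigarith{k}$. From the definition of \glbname I would start with
\[
x\,\bigl(\bigwedge_{i \in I} Φ_i\bigr)\,y \iff \forall i\,\bigl(i \in I \Rightarrow x\,Φ_i\,y\bigr) \iff \forall i\,\bigl(i \notin I \logicor x\,Φ_i\,y\bigr).
\]

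First I would fix, as is implicit in the phrase ``$\sigarith{k}$ set of $\piarith{k}$ equivalences'' and parallel to the uniformity convention adopted in the analogous proof for joins, a uniform $\piarith{k}$ predicate $P(i,x,y)$ with $P(i,x,y) \iff x\,Φ_i\,y$ whenever $i \in I$. Second, I would observe that since $I$ is $\sigarith{k}$, the predicate $i \notin I$ is $\piarith{k}$; together with $P(i,x,y)$ being $\piarith{k}$, the matrix $i \notin I \logicor P(i,x,y)$ is $\piarith{k}$, because the arithmetical hierarchy is closed at each level under finite disjunction (one shares the outer universal prefixes by renaming bound variables, and analogously for the inner existential prefixes). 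Finally, prepending the outer $\forall i$ preserves the class, yielding $\bigwedge_{i \in I} Φ_i \in \piarith{k}$.

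The only step requiring any real care is the closure of $\piarith{k}$ under finite $\logicor$ as applied to $i \notin I \logicor P(i,x,y)$, a standard exercise in quantifier manipulation that must nevertheless be invoked explicitly. Apart from this bookkeeping, the argument is the direct $\piarith{k}$--\glbname dual of the $\sigarith{k}$--\lubname statement already proved, and there is no genuinely hard obstacle.
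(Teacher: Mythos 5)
Your proof is correct and follows essentially the same route as the paper's: both rewrite the \glbname as $\forall i\,(i \notin I \logicor x\,Φ_i\,y)$ and observe that this is a $\piarith{k}$ formula since $I$ is $\sigarith{k}$. You merely make explicit the uniformity convention and the closure of $\piarith{k}$ under finite disjunction and universal quantification, which the paper leaves implicit.
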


\begin{proof}
  Let $\setof{Φ_i}{i \in I}$ be a $\sigarith{k}$ set of $\piarith{k}$
  equivalences and $\equone = \GLB Φ_i$. By definition, $x \equone y$
  iff $\forall i, i \in I \Rightarrow x Φ_i y$, which is equivalent to
  $\forall i, i \notin I \logicor x Φ_i y$, a $\piarith{k}$ formula.
\end{proof}

Here also, by inclusion of the hierarchy, $i ≤ j$ implies that any
$\sigarith{i}$ set of $\piarith{j}$ equivalence relations has a $\piarith{j}$
\glbname. Especially, the set of $\piarith{k}$ equivalence relations is closed
under recursively enumerable \glbname.

\begin{proposition}\label{prop:equl_not_glb_re}
  There is an r.e.\ set of elements of $\equl$ whose \glbname is not
  decidable.
\end{proposition}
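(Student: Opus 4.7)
The plan is to adapt the construction in Proposition~\ref{prop:get_any_singular} by replacing the arbitrary set $I$ with a concrete r.e.\ non-decidable set, and by choosing the approximating singular equivalences so that each one is LogSpace decidable uniformly in its index. The key observation is that if we arrange matters so that $\bigwedge_n \mathcal{E}_n$ is the singular equivalence whose non-singleton class is $\overline{K}$ for some r.e.\ non-decidable $K$, then the meet cannot be decidable because $\overline{K}$ is not even recursively enumerable.

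Concretely, let $K$ be the halting set, fix a computable enumeration $\phi \colon \bN \to K$, and set $K_n = \{\phi(0), \ldots, \phi(n)\}$. Let $\mathcal{E}_n$ be the singular equivalence whose singleton classes are exactly the elements of $K_n$ and whose unique non-singleton class is $\bN \setminus K_n$. A machine deciding $\mathcal{E}_n$ on input $(x,y)$ first accepts if $x=y$, then rejects if either $x$ or $y$ matches one of the finitely many hardcoded elements of $K_n$, and otherwise accepts; this uses only constant space, so $\mathcal{E}_n \in \equl$. The family $(\mathcal{E}_n)_{n \in \bN}$ is r.e.\ because an index for such a machine can be produced uniformly from $n$ by running $\phi$ for $n+1$ steps and writing out the corresponding transition table.

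For any two distinct $x, y$ one has $x \mathcal{E}_n y$ iff both lie in $\bN \setminus K_n$, hence $x \left(\bigwedge_n \mathcal{E}_n\right) y$ iff both lie in $\bN \setminus \bigcup_n K_n = \overline{K}$. Thus $\bigwedge_n \mathcal{E}_n$ is the singular equivalence with non-singleton class $\overline{K}$, which is infinite (otherwise $K$ would be cofinite, hence decidable). Consequently $x \in \overline{K}$ iff there exists $y \neq x$ with $x \left(\bigwedge_n \mathcal{E}_n\right) y$; were the meet decidable, this existential characterisation would make $\overline{K}$ recursively enumerable, contradicting that $K$ is r.e.\ but not decidable.

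The main subtlety is not the construction itself but the uniformity requirement: although pointwise every $\mathcal{E}_n$ is trivially decidable in constant space once $K_n$ is known, we must exhibit an r.e.\ \emph{family} of LogSpace machines, which amounts to verifying that the description of the machine for $\mathcal{E}_n$ is computable from $n$ (not merely that the equivalence itself is decidable). This is handled by absorbing the finite set $K_n$ into the machine's control at the cost of a machine description that grows with $n$.
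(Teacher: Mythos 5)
Your proof is correct and follows essentially the same route as the paper's: both construct a decreasing r.e.\ sequence of singular \logspace equivalences whose non-singleton classes are finite-stage over-approximations of the complement of the halting problem, so that the meet is the singular equivalence on the non-halting instances. The only difference is cosmetic --- the paper's $n$-th equivalence is defined by simulating $n$ steps of a machine (requiring a short argument that this is \logspace), whereas you hardcode the first $n+1$ enumerated elements of $K$ into the automaton's control, which makes the space bound immediate.
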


\begin{proof}[Sketch of proof]
  Let $\equone_n$ be the singular equivalence whose non-singleton set
  is the (representation of) Turing Machines that \emph{do not} halt
  in $n$ or less step. It is a \logspace equivalence by clever
  encoding of TMs (see, \emph{e.g.}, \cite[Ch. 3]{Pap94}), the main
  trick here being that we only need to simulate a \emph{fixed}
  number of steps ($n$) and this requires a \emph{fixed} amount of
  extra space (plus logarithmic overhead for the simulation). However,
  $\GLB \equone_n$ is the singular equivalence whose non-singleton
  class is the Turing Machines who never halt and is thus not decidable.
\end{proof}

\begin{corollary}
  The subrecursive sets of equivalences are not closed under
  arithmetical meets.

  The sets of $\piarith{k}$ equivalences are not closed under
  arithmetical meets.
\end{corollary}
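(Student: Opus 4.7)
The plan is to derive both claims from results already established in the paper, by picking sufficiently hard witness sets. The first claim is immediate from Proposition~\ref{prop:equl_not_glb_re}; the second reuses Proposition~\ref{prop:get_any_singular} with a careful choice of $I$.

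For the first claim I would observe that the equivalences $\equone_n$ built in the proof of Proposition~\ref{prop:equl_not_glb_re} live in $\equl$, hence in every subrecursive class of equivalences considered in Section~\ref{sec:subrecursive} (all of which contain $\equl$). The family $\{\equone_n\}_{n \in \bN}$ is recursively enumerable, so $\sigarith{1}$ and in particular arithmetical, while its \glbname is undecidable. Since decidability is a prerequisite for any subrecursive class, this single example refutes closure under arithmetical meets for all such classes at once.

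For the second claim, the meet obtained in Proposition~\ref{prop:equl_not_glb_re} is not reusable as is: its non-singleton class is $\piarith{1}$-complete and therefore $\piarith{k}$ for every $k \geq 1$. Instead, I would invoke Proposition~\ref{prop:get_any_singular} with $I$ taken to be a $\sigarith{k}$-complete set (for example a suitable iterate of the halting problem), guaranteeing $I \notin \piarith{k}$. Each of the small singular equivalences $\equ{F}_i$ produced by that proof is decidable, since its non-singleton class is a finite set together with an upper set, so in particular it is $\piarith{k}$. Uniformly, the relation $\{(i,m,n) : m \equ{F}_i n\}$ is $\sigarith{k}$, because its only dependence on $I$ comes through the bounded query $x < f_i \logicand x \in I$. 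Thus $\{\equ{F}_i\}_{i \in \bN}$ is an arithmetical (in fact $\sigarith{k}$-uniform) family of $\piarith{k}$ equivalences.

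The one step that needs genuine verification is that the resulting meet lies outside $\piarith{k}$. By Proposition~\ref{prop:get_any_singular} the meet is the singular equivalence whose non-singleton class is $I$; fixing any $m_0 \in I$, its $m_0$-class is $\{m_0\} \cup I$, so membership in $I$ and membership in the equivalence relation are interdefinable up to a single point. Hence the meet is $\piarith{k}$ iff $I$ is, which by choice of $I$ fails. The only real obstacle is to pick $I$ at exactly the right level of the hierarchy and to confirm that the uniform indexing of the $\equ{F}_i$ does not inadvertently push the family out of $\sigarith{k}$ — a bookkeeping point rather than a deep one.
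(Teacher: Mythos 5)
Your proposal is correct; its first half coincides with the paper's argument, while its second half takes a genuinely different route. For the first claim you do exactly what the paper does: read it off Proposition~\ref{prop:equl_not_glb_re}, noting that the family there is indexed by a decidable (hence arithmetical) set, sits inside $\equl$ and therefore inside every subrecursive class considered, and has an undecidable \glbname. For the second claim the paper proposes to adapt Proposition~\ref{prop:equl_not_glb_re} by replacing the halting problem with the halting problem for oracle machines at the appropriate level; you instead recycle the singular-equivalence construction of Proposition~\ref{prop:get_any_singular} with $I$ a $\sigarith{k}$-complete set. Your route is arguably more robust: each $\equ{F}_i$ is genuinely decidable (its non-singleton class is a finite set joined with an upper set), whereas in the oracle adaptation one must check that ``does not halt within $n$ steps relative to the oracle'' lands at exactly the right level of the hierarchy, which is delicate. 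One subtlety in your argument deserves to be made explicit rather than dismissed as bookkeeping: your family is uniformly $\sigarith{k}$ but only \emph{pointwise} $\piarith{k}$ --- deciding $m \in F_i^+$ for $m < f_i$ requires a positive query to $I$, so the uniform presentation is $\sigarith{k}$ and not $\piarith{k}$. This is essential for consistency with the paper's own positive result that the \glbname of a $\sigarith{k}$ set of (uniformly) $\piarith{k}$ equivalences is again $\piarith{k}$; your counterexample evades that proposition precisely because it is not uniformly $\piarith{k}$, and under the paper's informal reading of ``arithmetical set of $\piarith{k}$ equivalences'' (each member is extensionally a $\piarith{k}$ equivalence and the family is arithmetically presented) it is a valid refutation of closure. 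Your final reduction --- $I$ equals the $\GLB$-class of any fixed $m_0 \in I$, so the meet is $\piarith{k}$ only if $I$ is --- is exactly right.
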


For the second point, the proof must be adapted using the halting
problem for the correct class of oracle machines.

\section{Conclusion and Future Works}
Out of the three classes of equivalences that we have considered, the
automatic ones seem too few to be of interest; the subrecursive ones
do not form a sublattice, making them bad candidates for studying the
lattice structure; but the arithmetical ones seem more
interesting. Notably, the $\sigarith{k}$ sets of equivalences do keep
the lattice structure and especially $\equre$ is worth more efforts.

In addition to being the smallest $\sigarith{k}$ set of equivalences,
$\equre$ is also linked back to the starting point via the
Rice-Shapiro's Theorem~\cite{Myhill-Shepherdson:rice-shapiro,
  Shapiro:rice-shapiro}: while $\Riceeq \notin \equre$, we already
know something about equivalences in the intersection of $\equre$ and
the principal filter at $\Riceeq$.

Since there are equivalences in $\equre$ with no complements in it, we
do not get all the basic lattice property (the sublattice is not
complemented). Thus, we may still want to find other sublattices
defined by other criteria.

\smallskip

Other questions raised by this work include a more systematic study of
the subsets decided by various kind of automata; and trying to build,
for each equivalence in $\equp$, a complement in $\equp$.

\bibliographystyle{eptcs}
\bibliography{biblio}

\end{document}